\newcommand{\deleted}[1]{}
\newcommand{\freg}{\textit{freg}}
\newcommand{\fregc}{\overline{\textit{freg}}}
\newcommand{\gm}{\mathrm{Gmap}}
\newcommand{\fvd}{\mathrm{FVD}}
\newcommand{\Gs}{G^s}
\newcommand{\T}{\mathrm{FVD}}
\newcommand{\newapptheorem}[2]{
  \newtheorem{#1}{#2}[section]
 \counterwithin{#1}{section}}
\begin{document}
\mainmatter

\title{Linear-Time Algorithms for the Farthest-Segment
  Voronoi Diagram and Related Tree Structures\thanks{
    Research supported in part by
    the Swiss National Science Foundation, projects
    SNF 20GG21-134355 (ESF EUROCORES EuroGIGA/VORONOI) and  SNF
    200021E-154387.
    }}
\titlerunning{Linear-Time Algorithms for FsVD  and Related Tree
  Structures}
\title{An Expected Linear-Time Algorithm for the Farthest-Segment
  Voronoi Diagram\thanks{
    Research supported in part by
    the Swiss National Science Foundation, projects
    SNF 20GG21-134355 (ESF EUROCORES EuroGIGA/VORONOI) and  SNF 200021E-154387.
    }}
\titlerunning{Expected Linear-Time Algorithm for FsVD}

\author{Elena Khramtcova\and
        Evanthia Papadopoulou}

\institute{Faculty of Informatics,  Universit\`a della Svizzera
  italiana (USI), Lugano, Switzerland\\
 {\tt \{elena.khramtcova,  evanthia.papadopoulou\}@usi.ch}}

\maketitle

\begin{abstract}
We present an expected linear-time algorithm to construct 
the farthest-segment Voronoi diagram, given the sequence of its faces
at infinity.
This sequence forms a
Davenport-Schinzel sequence of order $3$ and it
 can be computed in $O(n\log n)$ time, where $n$ is the number
of input segments.
The farthest-segment Voronoi diagram is a tree, 
with disconnected Voronoi regions, of total  complexity $\Theta(n)$ in
the worst case.
Disconnected regions pose a major difficulty in
 deriving linear-time construction  algorithms for 
such tree-like Voronoi diagrams. 
In this paper we present a new approach towards this direction.
\end{abstract}

\section{Introduction}
It is well known that the Voronoi diagram of points in convex position 
can be computed in linear time, given 
the order of their convex hull~\cite{AGSS89}.
This linear-time construction extends to a class of related  diagrams
such as the farthest-point Voronoi diagram given their convex hull, the medial
axis of a convex polygon, and updating a nearest-neighbor Voronoi
diagram of points after deletion of one site.
In an abstract setting, 
a \emph{Hamiltonian abstract Voronoi diagram} 
can be computed in linear time~\cite{KL94}, given the order
of Voronoi regions along an 
unbounded simple curve, which visits each
region exactly once and can  intersect each bisector only once.
This construction has been extended recently to  include 
forest structures~\cite{BKLL18} under 
similar conditions where no region can have multiple faces within the domain enclosed by the curve.
The medial axis of a simple polygon can also be computed
in linear time~\cite{snoyeink99}.
It is therefore natural to ask what other types of Voronoi 
diagrams 
can be constructed in linear time.
In this paper we consider the farthest Voronoi diagram of line segments.

Classical variants of Voronoi diagrams such as  higher-order 
Voronoi diagrams for sites other than points, 
had been surprizingly ignored in the
  literature of computational geometry until recently \cite{bcklpz15,PZ16}. 
Given a set  $S$ of $n$ simple geometric objects in the plane, called sites, 
the \emph{order-$k$ Voronoi diagram} of $S$ is a partitioning of the
plane into regions such that every point within a 
region has the same $k$ nearest sites.  
For $k=1$, this is the \emph{nearest-neighbor Voronoi diagram} and
for $k=n-1$ it is the \emph{farthest-site  Voronoi diagram} of $S$.
Despite similarities, these diagrams for 
non-point sites, e.g., line segments,
can  illustrate
fundamental structural differences from their counterparts for points, 
such as the presence of disconnected regions (see also
\cite{AuDrysdale06,fpvd11,AbstractFarthestVoronoi}).
%This had been a gap in the computational geometry literature as
However, segment Voronoi diagrams are 
fundamental to problems involving proximity among polygonal objects,
see, e.g.,~\cite{P11} and references therein for 
 applications  in VLSI design.
%This paper contributes further in closing this gap.
For more information on Voronoi diagrams see the book of
Aurenhammer et al.~\cite{Aurenbook}.

In this paper we present an expected linear-time algorithm to construct the farthest segment Voronoi
diagram, after the sequence of its  faces at infinity is known.
Our approach is inspired by the
randomized approach of Chew~\cite{Chew90} to construct the Voronoi diagram
of points in convex position.
It establishes the means to deal with disconnected regions
within this  framework 
and we expect it to be applicable in other cases of tree-like diagrams.
A major difference with  respective problems for points is that the
sequence of Voronoi faces along a relevant  enclosing boundary
forms a Davenport-Schinzel sequence of order $\geq 2$ (order $3$ for
the farthest-segment Voronoi diagram) 
in contrast to the case of points, where no repetition can exist.
%for  constructing a tree-like Voronoi diagram, with disconnected regions, after the sequence
% of its  faces along an enclosing boundary (or at infinity) is known.
%We focus on the farthest-segment Voronoi diagram and present 
%a randomized approach to construct it that has been inspired by the
%randomized approach of Chew~\cite{Chew90} to construct the Voronoi diagram
%of points in convex position.
%Our technique establishes the means to deal with disconnected regions
%within this  framework 
%and we expect it to be applicable in other related cases of tree-like Voronoi diagrams.
Repetition introduces several complications, including the fact that 
the sequence of Voronoi faces  at infinity 
for a subset of
the original segments, $S'\subset S$,
is not a subsequence  of the  respective sequence for $S$.
In addition, such a subsequence may not even correspond to a Voronoi
diagram.
The intermediate diagrams 
computed by our algorithm are interesting in their
own right. They have the structural properties of 
segment Voronoi diagrams, however, 
they do not correspond to a segment diagram, nor are they instances of
abstract Voronoi diagrams. 

The resulting algorithm in this paper is very simple, no more
complicated than its counterpart for points (see e.g., \cite{CGbook}).
The major challenge is formulating  and handling  the 
intermediate structures. We expect these structures to also be useful in
considering the linear-time framework of Aggarwal et
al.~\cite{AGSS89}, which remains an open problem.%
\footnote{%A deterministic linear-time algorithm to construct the
%farthest segment Voronoi diagram remains an open problem.
  A preliminary version
  of this  paper \cite{ISAAC15} contains a gap in adapting the
  framework of \cite{AGSS89}, thus, a deterministic linear-time algorithm 
  remains open.}

This paper is organized as follows. Section~\ref{sec:def}
gives all necessary preliminaries and definitions, including the
definition of an arc sequence of the \emph{Gaussian map} of a set of line
segments (see Section~\ref{sec:subsequences}). Section~\ref{sec:fvd}
introduces the farthest Voronoi diagram of an arc sequence and studies
its properties. In Section~\ref{sec:del-ins} we introduce the
operations of insertion and deletion of an arc from an arc sequence
and its farthest Voronoi diagram. Section~\ref{sec:randomized} gives
an expected linear-time algorithm to compute 
the farthest Voronoi diagram for certain arc sequences, which implies an algorithm to compute
the farthest-segment Voronoi diagram. Finally, we give brief concluding remarks.

\section{Preliminaries and Definitions} 
\label{sec:def}
Let $S$ be a set of arbitrary line segments in $\mathbb{R}^2$;
segments in $S$  may
intersect or touch at  a single point.
The distance between a point $q$ and a line segment $s_i$ is
$d(q,s_i)=\min\{d(q,y) \mid y\in s_i\}$, where $d(q,y)$ denotes
the ordinary distance between two points $q,y$
 in  the $L_2$ (or the $L_p$) metric.
In this paper we focus on the $L_2$ metric, however,
  our technique applies also to $L_p$, $p \geq 1$.

\begin{figure}
\begin{minipage}{0.49\linewidth}
\centering
\includegraphics[page=1]{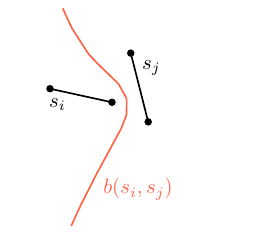} \\ 
(a)
\end{minipage}
\hfill
\begin{minipage}{0.49\linewidth}
\centering
\includegraphics[page=2]{2d-bis} \\ 
(b)
\end{minipage}
\caption{The bisector of two  segments  $s_i,s_j$. Two cases: (a) $s_i$ and $s_j$ are disjoint, 
and (b) $s_i$ and $s_j$ share an endpoint.}
\label{fig:bisector}
\end{figure}

The bisector of two segments  $s_i,s_j\in S$ is 
$b(s_i,s_j) =  \{x \in \mathbb{R}^{2} \mid d(x, s_{i}) = d(x,s_{j})\}. $ 
For disjoint segments, $b(s_i,s_j)$ is an unbounded  curve that consists of a constant
number of  pieces, where each piece is a portion of an elementary bisector between the
endpoints and open portions of $s_i, s_j$, see Fig.~\ref{fig:bisector}a.
If two segments intersect at point $p$ (other than their  common endpoint)
their bisector consists of two such curves intersecting at $p$.
The unbounded pieces of $b(s_i,s_j)$ are
rays of 
bisectors between the segment endpoints oriented towards infinity. 
We refer to the direction of such a ray
as a \emph{direction of $b(s_i,s_j)$}.
Bisector $b(s_i,s_j)$ has two directions, if $s_i, s_j$ are disjoint, and four if $s_i, s_j$ intersect transversally. 
If  segments $s_i, s_j$
have a common endpoint then $b(s_i,s_j)$ contains a 2-dimensional
region, see the shaded region in Fig.~\ref{fig:bisector}b. Following  standard conventions, see e.g.~\cite{AuDrysdale06},
the 2-dimensional 
region can be replaced by the piece of
the angular bisector of $s_i,
s_j$ within this 
region, obtaining a single curve, see the red curve in Fig.~\ref{fig:bisector}b.

The farthest Voronoi region of a  segment $s_{i}$ is 
$\textit{freg}(s_{i}) = \{x \in \mathbb{R}^{2} \mid d(x, s_{i}) > 
 d(x,s_{j}), 1 \leq j \leq n,  j\neq i \}$. 
The (non-empty) farthest 
Voronoi  regions of
the segments in $S$, together with their bounding edges and vertices, 
define a partition of the plane, called the \emph{farthest-segment 
Voronoi diagram} of $S$, denoted $\fvd(S)$,
see Fig.~\ref{fig:fvd-hull-gmap}a.
Any maximally connected subset of a Voronoi region is called a \emph{face}.

\begin{figure}
\begin{center}
\includegraphics{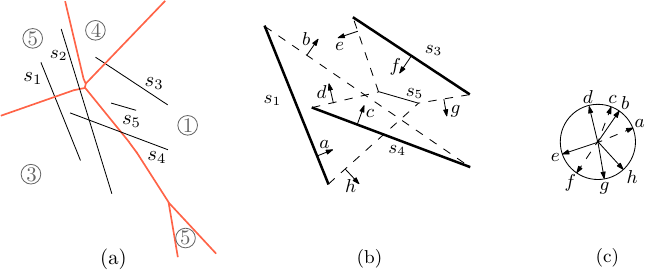}
\caption{\cite{PD13} (a) $\fvd(S)$,  $S = \{s_1,\dots,s_5\}$; (b) its farthest hull; (c) $\gm(S)$}
\label{fig:fvd-hull-gmap}
\end{center}
\end{figure}

In the following we review the notions of a \emph{hull} and its \emph{Gaussian
  map}  that characterize the faces of the farthest-segment Voronoi
diagram and the directions along which they are unbounded.
We also define an \emph{arc sequence} that is used
throughout this paper.

A farthest Voronoi region $\freg(s_i)$  is non-empty 
and unbounded in direction $\phi$ if and only if there
exists an open halfplane, normal to $\phi$, which intersects all
segments in $S$ but $s_i$~\cite{AuDrysdale06}. 
The line  $\ell$, normal to $\phi$,
bounding such a halfplane, is called a \emph{supporting line}.
The direction $\phi$ (normal to $\ell$) is called
the \emph{hull direction} of $\ell$ and is denoted $\nu(\ell)$. 
An unbounded Voronoi edge separating
$\freg(s_i)$ and $\freg(s_j)$ contains  a portion of $b(p,q)$, for
endpoints $p,q$ of $s_i,s_j$ respectively, where 
the line through  $\overline{pq}$  induces  an open halfplane 
 that intersects all segments in 
$S$, except  $s_i,s_j$  (and  possibly except additional  segments incident to
$p,q$). Segment  $\overline{pq}$ is called a \emph{supporting segment};
the direction normal to it pointing to the inside
of this halfplane is called its
\emph{hull direction}  and it is denoted $\nu(\overline{pq})$. 
A segment $s_i\in S$ such that the line $\ell$ through $s_i$ is supporting,
is called a \emph{hull segment}; 
 its \emph{hull direction} $\nu(s_i)$ is $\nu(\ell)$.
The closed polygonal curve obtained by following the supporting and
hull segments in the angular order of their hull directions is called the
\emph{farthest hull}~\cite{PD13}.
Figs.~\ref{fig:fvd-hull-gmap}a and~\ref{fig:fvd-hull-gmap}b 
illustrate a farthest-segment Voronoi
diagram and its hull respectively. In Fig.~\ref{fig:fvd-hull-gmap}b, supporting segments are shown in 
dashed lines; hull segments are shown in bold; 
arrows indicate the hull directions of all
supporting and hull segments.

The Gaussian map of $\fvd(S)$, denoted $\gm(S)$  (see
Fig.~\ref{fig:fvd-hull-gmap}c), provides a correspondence between
the faces of  $\fvd(S)$ and a \emph{circle of directions} $K$~\cite{PD13}.
 $K$ can be assumed to be a unit circle, where
each point $x$ on $K$ corresponds to a direction as indicated by the
radius of $K$  at $x$. Each Voronoi face is mapped to an arc on $K$,  which represents  the set
of directions along which the face is unbounded. 
The  $\gm(S)$ can be viewed as a cyclic sequence of arcs on $K$ 
(in counterclockwise order), where each arc corresponds to one face of  $\fvd(S)$. 
Two neighboring arcs  $\alpha, \gamma$ are separated by $\nu(\alpha,
\gamma)=\nu(\overline{pq})$, where $p,q$ are the endpoints of
$s_\alpha,s_\gamma$ forming a supporting segment;
$\nu(\alpha, \gamma)$ is the relevant direction of bisector $b(p,q)$. 
Fig.~\ref{fig:fvd-hull-gmap}c shows $\gm(S)$ for the segment set $S$ of 
Fig.~\ref{fig:fvd-hull-gmap}a; 
hull directions for hull and supporting segments 
are shown as dashed and solid arrows respectively.
The arc on $\gm(S)$ of a hull segment $s$ is called a \emph{segment arc};
it consists of two sub-arcs, one for each endpoint of $s$,  separated by 
$\nu(s)$ (see e.g., the dashed vectors $a,c,f$ in
Fig.~\ref{fig:fvd-hull-gmap}c). 
An arc that corresponds to a single
endpoint of a segment is called a \emph{single-vertex} arc.
The  $\gm(S)$ can be computed in $O(n\log n)$  time (or output-sensitive $O(n\log
h)$ time, where $h=|\gm(S)|$)~\cite{PD13}.

The  standard point-line duality transformation $T$ offers 
another  correspondence between the faces of $\fvd(S)$ and
envelopes of \emph{wedges}~\cite{AuDrysdale06}. 
It maps a point $p=(a,b)$ in the primal
plane to a line $T(p): y=ax-b$ in the dual plane, and vice versa.
A segment $s_i=uv$ corresponds  to a \emph{lower wedge}
and to an \emph{upper wedge}, 
defined  respectively 
as the area below and the area above 
both lines $T(u)$
and $T(v)$.
Let  $E$ (resp., $E'$)
be the boundary of the union of the lower (resp., upper) wedges. 
The faces of  $\fvd(S)$ correspond exactly to the edges of $E$ and $E'$~\cite{AuDrysdale06}. 

Let the upper and lower Gmap be the portion of $\gm(S)$ above and
below respectively the horizontal diameter of $K$.
There is a clear correspondence between $E$ (resp., $E'$) 
and the upper (resp., lower)
Gmap: the vertices of $E$
are exactly the  hull directions of supporting segments on the upper
Gmap (corresponding to directions of bisectors)  and
the apexes of wedges in $E$ are exactly the hull directions of hull
segments~\cite{PD13}; see Fig.~\ref{fig:arc-seq}a and \ref{fig:arc-seq}b.
For both $E$ and $E'$,  their left-to-right order of
increasing x-coordinate corresponds to a counterclockwise order of
directions in $\gm$.

\subsection{Arc sequences and subsequences of $\gm(S)$}
\label{sec:subsequences}

Consider the two arrangements of upper and lower wedges for 
the segments in $S$.
Any $x$-monotone path $\pi$ in the arrangement of lower
(resp., upper) wedges can be
transformed  into a sequence of arcs in the 
upper (resp., lower) half-circle of $K$. 
Fig.~\ref{fig:arc-seq}b shows two such  paths in blue and red lines
respectively.
The left-to-right traversal of path $\pi$ corresponds to a counterclockwise traversal of the arc sequence. 
The arc sequence corresponding to the red path  is shown in Fig.~\ref{fig:arc-seq}c.

\begin{figure} 
\begin{center}
\includegraphics{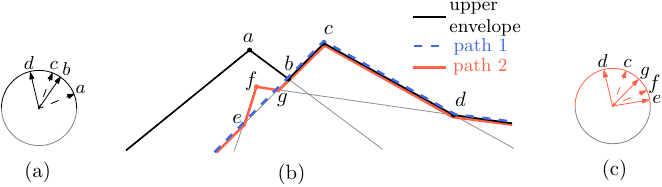}
\caption{(a) The upper Gmap of $S$ from Fig.~\ref{fig:fvd-hull-gmap}; (b) the dual arrangement of lower wedges with its upper envelope (black) and two other $x$-monotone 
paths, shown  in blue (dashed) and in red. 
(c) upper Gmap corresponding to the red path.} 
\label{fig:arc-seq}
\end{center}
\end{figure}

\begin{definition}
A sequence of arcs on the circle of directions $K$ 
that corresponds to a pair of $x$-monotone paths,
one in the arrangement of upper and one in the arrangement of lower
wedges, is called an \emph{arc sequence}.
\end{definition}

An arc sequence $G$ may contain consecutive arcs of the same
segment.
The maximal union of such consecutive arcs is called a \emph{maximal
  arc}. 
The \emph{simplified version} of $G$, denoted $\Gs$, consists of unified
maximal arcs for any consecutive arcs of the same
segment in $G$.
In a simplified arc sequence, any two neighboring arcs 
belong to two different segments.
In an arc sequence, no arcs  can overlap and no gaps can be present.

An arc sequence $G$ is  called a \emph{subsequence} of $\gm(S)$ if
every arc of $G$ entirely contains an arc of $\gm(S)$ that is induced
by the same segment. The arcs in $G$ are simply expanded versions of
the arcs in $\gm(S)$. 
Fig.~\ref{fig:arc-seq}b illustrates in dual space a subsequence of $\gm(S)$ 
in dashed blue lines (path 1).

The arcs in $\gm(S)$ as well as their expanded
versions in any subsequence $G$ of $\gm(S)$ are called \emph{original arcs}.
The exact arcs of $\gm(S)$ (with fixed endpoints) are called \emph{core arcs}. An original
arc always contains its corresponding core arc, 
which is referred to as its \emph{core}.
The core of an arc $\beta\in G$ is denoted as $\beta^*$.

An arc  sequence $G'$  is called an \emph{augmented subsequence} of
$\gm(S)$, if in addition to \emph{original} arcs of $\gm(S)$, 
it also contains some \emph{new
arcs} such that for every new arc in $G'$ there is an original arc of
the same segment in $G'$.
Fig.~\ref{fig:arc-seq}b illustrates in dual space an augmented
subsequence $G'$ of $\gm(S)$ in red solid lines (path 2).
The parts of the augmented subsequence before point $g$,  correspond to new arcs, 
and the parts after point $g$ are original arcs. 
An augmented subsequence  $G'$, which has the same original arcs as
a subsequence  $G$, is said to be \emph{corresponding} to $G$.
In dual space, $G'$ is an $x$-monotone path lying between the path of
$G$ and the envelope of the arrangement (see the red and blue paths
respectively in Fig.~\ref{fig:arc-seq}b).

Throughout this paper, given an arc $\alpha$, let $s_\alpha$ denote
the segment in $S$ that induces $\alpha$.

\section{The Farthest Voronoi Diagram of an Arc Sequence} 
\label{sec:fvd}

In this section we define the farthest Voronoi diagram of an arc
sequence $G$,  denoted $\fvd(G)$, 
where $G$ is a (possibly augmented) subsequence of  $\gm(S)$.
Such diagrams appear as intermediate structures in the process of
computing  $\fvd(S)$. 
If $G=\gm(S)$, then $\fvd(G)=\fvd(S)$, see Fig.~\ref{fig:base-case}.

\begin{figure}[h]
\begin{minipage}{0.49\linewidth}
\centering
\includegraphics[page = 1]{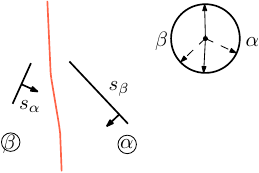}\\
(a)
\end{minipage}
\hfill
\begin{minipage}{0.49\linewidth}
\centering
\includegraphics {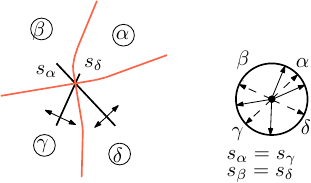}\\
(b)
\end{minipage}
\caption{$\fvd(S)=\fvd(G)$, where $G = \gm(S)$ for  
(a) set $S = \{s_\alpha,s_\beta\}$ of disjoint segments; 
(b) set $S = \{s_\alpha,s_\beta\}$ of intersecting segments}
\label{fig:base-case}
\end{figure}

Given an arc $\alpha\in G$ and a  point 
 $x \not\in s_\alpha$, 
let  $r(x,s_\alpha)$  denote the ray emanating from $x$ in 
the direction $\overrightarrow{px}$, where $p$ 
is the point in $s_\alpha$ closest to $x$ (see Fig.~\ref{fig:fregs}).

\begin{figure}[h]
\centering
\includegraphics[page = 1]{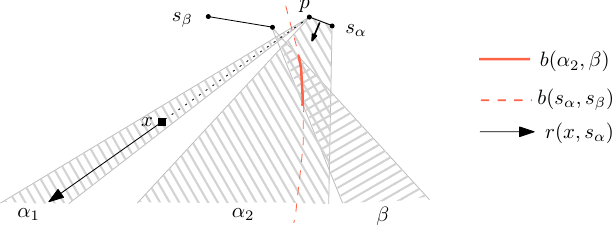} 
\caption{Arcs $\alpha_1, \alpha_2$ of segment $s_\alpha$, arc $\beta$ of segment $s_\beta$; 
attainable regions $R(\alpha_1), R(\alpha_2), R(\beta)$ (shaded); segment bisector $b(s_\alpha,s_\beta)$ (red, dashed) and 
the arc bisector $b(\alpha_2,\beta)$ (red, bold)}
\label{fig:fregs}
\end{figure}

\begin{definition}
\label{def:attainable}
A point $x$, $x \not\in s_\alpha$,  is said to be attainable from $\alpha$  if the direction of
ray $r(x,s_\alpha)$ is contained in $\alpha$. 
An endpoint of segment $s_\alpha$ is attainable from all its corresponding single-vertex arcs. 
All points in $s_\alpha$ are attainable from $\alpha$ if $\alpha$ is a segment arc. 
The locus of points attainable from arc $\alpha$ is called the
\emph{attainable region} of $\alpha$, $R(\alpha)$. See Fig.~\ref{fig:fregs}. 
\end{definition}

The attainable region $R(\alpha)$ is delimited by  two rays  in the direction of the arc-endpoints of $\alpha$ that emanate from the
relevant endpoint(s) of $s_\alpha$, see Fig.~\ref{fig:fregs}.
Within its attainable region, an arc $\alpha$ corresponds to a portion of segment 
$s_\alpha$.
In particular, 
if $\alpha$ is a single-vertex
arc, it corresponds to an endpoint of $s_\alpha$ (see $\alpha_1$ and $\beta$ in
Fig.~\ref{fig:fregs});  
if  $\alpha$ is a segment
arc, it corresponds to one
side of $s_\alpha$ including both its endpoints (see $\alpha_2$ in
Fig.~\ref{fig:fregs}).
We restrict a segment arc $\alpha$ to include exactly one hull direction
of $s_\alpha$. If an arc $\beta$ contains both such directions we
split it in two consecutive segment arcs, $\beta_1$ and $\beta_2$,  by
the direction $\nu\in\beta$ co-linear to $s_\beta$.
Because of this convention, the attainable
region of a segment arc $\alpha$ borders exactly one side of $s_\alpha$.

\begin{remark}
\label{rem:att-disj}
For any 
two arcs $\alpha_1, \alpha_2 \in G$ of the
same segment $s_\alpha$, the interiors of their attainable regions are disjoint.
\end{remark}

\begin{definition}
\label{def:distance}
The distance between an arc $\alpha$ and a point
$x \in  \mathbb{R}^2$ is defined as follows: 
\begin{align*}  d(x, \alpha) &= d(x, s_{\alpha}),  \text{ if $x \in R(\alpha)$;} \\
    d(x, \alpha) &= -\infty,  \text{     if $x \notin R(\alpha)$.} 
\end{align*}
\end{definition}

\begin{definition}
\label{def:arcbisector}
The \emph{arc bisector} between two arcs in $G$  is the locus
of points that are attainable and equidistant from both arcs.  
\end{definition}

The following characterization of arc bisectors directly follows 
from their definition.

\begin{lemma}
\label{rem:arc-bisector}
For two arcs  $\alpha,\beta$, where $s_\alpha\neq s_\beta$,
their \emph{arc bisector}
$b(\alpha,\beta)=b(s_\alpha,s_\beta) \cap R(\alpha)\cap R(\beta)$. 
If  $s_\alpha=s_\beta$, then the arc bisector  $b(\alpha,\beta)=
R(\alpha)\cap R(\beta)$, i.e., $b(\alpha,\beta)$ is
the common boundary of $R(\alpha)$ and $R(\beta)$, which is possibly $\emptyset$.
\end{lemma}

If  $s_\alpha=s_\beta$, the arc bisector $b(\alpha,\beta)$ is an
\emph{artificial bisector}. 
If $\alpha$ and $\beta$ are consecutive, 
then $b(\alpha,\beta)$
depends on what is
considered to be the common
endpoint of $\alpha$ and $\beta$, which is a direction $\nu$. 
In this case $b(\alpha,\beta)$  contains  a ray $r$,
in the direction $\nu$, emanating from the endpoint of $s_\alpha$
that is relevant to $\alpha$ and $\beta$.
In addition to $r$, $b(\alpha,\beta)$ may also 
contain  segment $s_\alpha$, if $s_\alpha \in R(\alpha)\cap R(\beta)$. 
The direction $\nu$ is typically derived in this paper by the
hull direction of the segment arc, whose removal made $\alpha$
and $\beta$ consecutive in $G$ (for details see
Sec.~\ref{sec:del-ins}).
Fig.\ref{fig:base-fvd}a illustrates such an artificial bisector between
 arcs
$\alpha$ and $\gamma$, where $s_\alpha=s_\gamma$.
If $\alpha$ and $\beta$ were made consecutive, not because of removing
an arc between them, but because of our convention that segment arcs
contain only one hull direction, 
then  $\nu$ is  the direction of a ray through $s_\alpha$, in the
maximal arc $\alpha\beta$.

\begin{definition}
\label{def:freg}
The farthest Voronoi region of an arc $\alpha$ in an arc sequence $G$ is
\[ \textit{freg}(\alpha)
= \{x \in \mathbb{R}^{2} \mid d(x, \alpha) > 
d(x, \gamma), \forall \textrm{ arc } \gamma \in G, \gamma\neq \alpha
\}. \]

\noindent
If the boundary of $\freg(\alpha)$ consists solely of arc bisectors,
then 
$\freg(\alpha)$ is called \emph{proper}.
Let $\fregc(\alpha)$ denote the closure of $\freg(\alpha)$.
\end{definition}

\begin{definition}
The partition of the plane defined by the farthest Voronoi  regions  of the
arcs in $G$ and their boundaries,  
is called  the
\emph{farthest Voronoi diagram} of $G$, and is denoted $\fvd(G)$. 
In particular, 

\[\T(G)=\mathbb{R}^{2} \setminus \displaystyle{\cup_{\alpha\in
    G}}\freg(\alpha). \]

\noindent
If $\freg(\alpha), \forall \alpha\in G$, is proper, then  $\T(G)$ as
well as the arc sequence $G$ are
also called proper. 
\end{definition}

Fig.~\ref{fig:base-fvd}a illustrates $\fvd(G)$ for an arc sequence
$G=\alpha\gamma\delta$ where $s_\alpha=s_\gamma$, where  $G$ is a subsequence
of $\gm(S)$ in  Fig. \ref{fig:base-case}b.
Fig.~\ref{fig:base-fvd}b illustrates $\fvd(\Gs)$ ($\Gs$ is 
the simplified version of $G$);  $\hat{\alpha\gamma}$ denotes a
single maximal arc for segment $s_\alpha$.
In Fig.~\ref{fig:base-fvd}a,  note 
that
arcs $\alpha$ and $\gamma$ are separated by $\nu(\beta)=\nu(s_\beta)$. 
The common boundary of $\freg(\alpha)$ and $\freg(\gamma)$ is a portion of the artificial 
bisector $b(\alpha,\gamma)$; 
the whole bisector $b(\alpha,\gamma)$ is the union of this portion and the segment $s_\alpha$.
The attainable regions of $\alpha$ and $\beta$ in Fig.~\ref{fig:base-fvd}a are illustrated by different shading (respectively, falling and rising).
In Fig. \ref{fig:base-case}b, shading indicates $\freg(\hat{\alpha\gamma})$ in $\fvd(\Gs)$.

\begin{figure}
\begin{minipage}{0.49\linewidth}
\centering
\includegraphics[page=1]{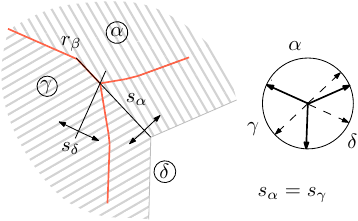}\\
(a)
\end{minipage}
\begin{minipage}{0.49\linewidth}
\centering
\includegraphics[page=2]{base-case-3-c}\\
(b)
\end{minipage}
\caption{(a) $\fvd(G),  G=\alpha\gamma\delta$; (b) $\fvd(\Gs)$ }
\label{fig:base-fvd}
\end{figure}

\begin{lemma}
\label{lemma:freg-disjoint}
For any two arcs $\alpha,\beta \in G$, $\freg(\alpha)\cap \freg(\beta)=\emptyset$.
\end{lemma}
\begin{proof}
By the definition of $\freg(\alpha)$,  any point in $\freg(\alpha)$
must be attainable from $\alpha$, thus,  $\freg(\alpha)\subseteq R(\alpha)$.
By Remark \ref{rem:att-disj}, if $s_\alpha=s_\beta$ then the
interiors of $R(\alpha)$ and $R(\beta)$ are disjoint; thus,
$\freg(\alpha)\cap \freg(\beta)=\emptyset$.
Suppose $s_\alpha\neq s_\beta$. Then if there is a point $x\in
\freg(\alpha)\cap \freg(\beta)$, it must be attainable from both $\alpha$
and $\beta$. But then $x$ can not be in both $\freg(\alpha)$ and
$\freg(\beta)$ (by the definition). Thus, $\freg(\alpha)\cap
\freg(\beta)=\emptyset$.
\qed
\end{proof}

\begin{lemma}
\label{lemma:cover}
If $\partial\freg(\alpha), \forall \alpha \in G$, consists
solely of portions of arc bisectors involving $\alpha$, then
 $\displaystyle{\cup_{\alpha\in G}}\fregc(\alpha)=\mathbb{R}^{2}$.
\end{lemma}

\begin{figure}
\centering
\includegraphics{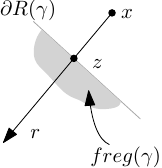}
\caption{Illustrations for the proof of Lemma~\ref{lemma:cover}.}
\label{fig:lemma-cover}
\end{figure}

\begin{proof}
Suppose for the sake of contradiction, that there is a point $x \in \mathbb{R}^2$, 
such that $x \notin \fregc(\alpha)$ for any $\alpha \in G$.
Then $x$  can not be attainable from any arc in $G$ (by the definition
of  $\freg(\alpha)$)
because if it did then $x \in \fregc(\alpha)$ for some $\alpha$ that
$x$ is attainable from.
Consider a ray $r$ originating at $x$ in a direction $\phi$. 
See Fig.~\ref{fig:lemma-cover}.  
Points on $r$ far enough from $x$ must be  attainable from an arc $\alpha \in G$ such that  $\phi \in \alpha$.
Thus, when moving on $r$ starting at $x$,  there is a  point $z$ at that we must 
cross the boundary of an attainable region for
the first time.
 This is the attainable region  of some arc $\gamma \in G$ (not necessarily $\gamma = \alpha$). 
Thus, $d(z,\gamma) \neq -\infty$ and $d(z, \beta) = -\infty$ for any arc $\beta \in G, \beta \neq \gamma$. 
Since the 
interior of segment $xz$
is outside  $R(\gamma)$, $z$ lies on the boundary of $\freg(\gamma)$
(shown shaded in Fig.~\ref{fig:lemma-cover}). 
However, $z$ is not attainable from any arc other than $\gamma$, thus,
$z$ is not on any arc bisector; a contradiction. \qed
\end{proof}

\begin{theorem} \label{theorem:fvd-is-tree}
  % For a  proper arc sequence $G$,
  If all edges of $\T(G)$ are pieces of arc bisectors ($G$ is a proper arc
  sequence) then  $\T(G)$ is a tree. 
\end{theorem}

\deleted{
\begin{lemma} \label{lemma:fvd-is-tree}
For a  proper arc sequence $G$, $\T(G)$ is a tree. 
\end{lemma}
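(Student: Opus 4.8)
The plan is to establish the two defining properties of a tree separately: first that $\T(G)$ is connected, and second that it is acyclic (equivalently, that $\fvd(G)$ contains no bounded face). Since $G$ is proper, every edge of $\T(G)$ is a portion of an arc bisector, so I may reason entirely in terms of the arc bisectors $b(\alpha,\beta)$ and the attainable regions $R(\alpha)$, without worrying about stray one- or two-dimensional pieces of $\T(G)$.

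For acyclicity, I would argue that no Voronoi region $\freg(\alpha)$ is bounded, and then invoke the standard argument that a planar subdivision all of whose faces are unbounded has no cycle in its edge graph. To see that $\freg(\alpha)$ is unbounded: each arc $\alpha$ is a genuine arc of directions on $K$, so pick a direction $\phi$ in the interior of $\alpha$. Moving far out along $\phi$, a point $x$ is attainable from $\alpha$ (its farthest ray $r(x,s_\alpha)$ points essentially along $\phi\in\alpha$), while for every other arc $\gamma$ the distance $d(x,\gamma)$ is either $-\infty$ (if $x$ is not attainable from $\gamma$) or is dominated by $d(x,\alpha)$ because $s_\alpha$ is, in direction $\phi$, the "extreme" segment — this is exactly the supporting-line characterization of unbounded farthest regions recalled in the Preliminaries, transported through the Gmap correspondence. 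Hence a whole unbounded cone in direction $\phi$ lies in $\freg(\alpha)$, so every region is unbounded and $\T(G)$ is acyclic.

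For connectivity, I would use the dual picture. By the setup of Section~\ref{sec:fvd}, $G$ corresponds to a pair of $x$-monotone paths, one in the arrangement of upper wedges and one in the arrangement of lower wedges, and the faces of $\fvd(G)$ in cyclic order correspond to the edges of these two paths in $x$-order, exactly as in the $\fvd(S)$ case. The unbounded Voronoi edges of $\fvd(G)$ are in bijection with the vertices of these two monotone paths, i.e.\ with the points $\nu(\alpha,\gamma)$ separating consecutive arcs of $G$. Now contract all of $\T(G)$ "at infinity": the cyclic sequence of faces around the outer boundary, read off from the Gmap/monotone paths, tells us precisely in which cyclic order the unbounded edges of $\T(G)$ emanate, and consecutive regions in this cyclic order share an edge of $\T(G)$. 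A graph drawn in the plane whose complement is a disjoint union of unbounded faces, and which meets a large circle in a single cyclic sequence of arcs consistent with the region adjacencies, must be connected — otherwise two distinct connected components of $\T(G)$ would have to nest or interleave around the circle in a way incompatible with the single cyclic face order. Making this last step precise is where I would be careful: I would show that if $\T(G)$ had a component $C$ not reaching infinity, it would bound a hole, contradicting properness; and if it had two components both reaching infinity, the circular order of their unbounded rays would force some region to appear in two disjoint circular intervals while still being a single face, contradicting the one-to-one face/edge correspondence with the monotone dual paths.

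The main obstacle is the connectivity argument, and specifically ruling out two or more infinite components: acyclicity is essentially the classical "unbounded regions" argument, and "no holes" is immediate from properness. For the infinite-components issue I expect the cleanest route is an Euler-formula count: using the bijections face$\leftrightarrow$arc of $G$ and unbounded-edge$\leftrightarrow$vertex of the dual monotone paths (hence $\leftrightarrow$ adjacent pair of arcs in $G$), together with the fact that each internal vertex of $\T(G)$ has degree $\geq 3$ (a Voronoi vertex of a farthest-type diagram is equidistant from $\geq 3$ arcs), one gets $V-E+F=2$ forcing the graph to be a single tree; any extra component or cycle would violate the count. I would present the Euler-count version as the rigorous core and keep the cyclic-order intuition as motivation.
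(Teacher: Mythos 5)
Your two-part plan (acyclicity via unboundedness of regions, then connectivity) is a reasonable decomposition, and for acyclicity your direction-$\phi$ argument is close in spirit to what the paper does, though the paper instead proves the stronger statement that for any $x\in\T(G)$ on a bisector $b(\alpha,\beta)$, the entire ray $r(x,s_\alpha)$ stays in $\fregc(\alpha)$ (using properness to rule out the ray being ``stopped'' by a newly attainable arc). That stronger ray-shapedness matters: unboundedness of $\freg(\alpha)$ alone does not rule out a bounded \emph{connected component} of some $\freg(\gamma)$ sitting inside a cycle, so you would still need to show each region is connected before ``all faces are unbounded $\Rightarrow$ no cycles'' goes through. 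Your sketch does not address this, which is a small but real gap.

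The connectivity argument is where the proposal genuinely breaks down. The Euler count you outline cannot force connectivity of $\T(G)$ in the plane: once you adjoin a point at infinity, any two unbounded components of $\T(G)$ become connected through $\infty$, so $V-E+F=2$ holds equally well whether the finite graph $\T(G)$ has one or several unbounded components (e.g.\ two disjoint infinite Y-trees on the sphere still satisfy the count with the degree-$\geq 3$ condition). You also implicitly assume the bijection ``face $\leftrightarrow$ arc of $G$'' and the statement that ``consecutive regions in the cyclic order at infinity share an edge of $\T(G)$,'' but the latter is essentially equivalent to the connectivity you are trying to prove, so the argument is circular unless these are established independently. You acknowledge this is where you would ``be careful,'' but as written the core of the connectivity proof is missing. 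The paper's route is quite different and avoids all of this: it argues that a disconnection would force two non-consecutive arcs $\alpha_1,\alpha_2$ of the \emph{same} segment to have path-connected regions, and then uses the ray argument together with Remark~\ref{rem:att-disj} (disjointness of attainable regions $R(\alpha_1)$ and $R(\alpha_2)$) to show the two arcs would in fact have to be neighboring, a contradiction. That same-segment observation is the key structural lemma here, and it is absent from your proposal; without it or some substitute, the connectivity step does not close.
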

}

\begin{proof}
Since $G$ is proper, $\T(G)$ is a plane graph and 
each of its edges is a portion of an arc bisector. 
Suppose first 
that $G$ has been simplified ($G=G^s$).
Let $x$ be a point in $\freg(\alpha)$ for some $\alpha \in G$.
We first prove that the entire ray $r(x,s_\alpha)$ 
is enclosed in $\freg(\alpha)$, thus, $\freg(\alpha)$ is unbounded.

Consider any arc $\gamma \in G$, other than $\alpha$, that is attainable from $x$. 
Since  $d(x,\alpha)= d(x,s_\alpha)> d(x,\gamma)= d(x,s_\gamma)$, 
arc bisector $b(\alpha,\gamma)$, which is a portion of
$b(s_\alpha,s_\gamma)$, cannot intersect $r(x,s_\alpha)$.
This is easy to see by considering a disk $D(y)$
centered at a point $y$ of $r(x,s_\alpha)$, see Fig.~\ref{fig:lemma-tree}a.  
As $y$ moves along
$r(x,s_\alpha)$, $D(y)$ enlarges and it must always
intersect $s_\gamma$ (see \cite[Lemma~1]{AuDrysdale06}). 
Thus, no arc bisector $b(\alpha,\gamma)$, where $\gamma$ is attainable
from $x$, 
can 
intersect $r(x,s_\alpha)$ as we walk on $r(x,s_\alpha)$ towards infinity starting at $x$. 
Suppose now that an arc $\delta$ that is not attainable from $x$, 
suddenly becomes attainable as we walk along $r(x,s_\alpha)$, because 
$r(x,s_\alpha)$ intersects $R(\delta)$ at a point $z$.
If $d(z,\delta)<d(z,\alpha)$ then $z$ and a neighborhood around it
must remain in $\freg(\alpha)$. 
Since $z$ is attainable from $\delta$, arc bisector
$b(\alpha,\delta)$ cannot intersect $r(z,s_\alpha)$ for the same
reason as above.
If on the other hand, $d(z,\delta)\geq d(z,\alpha)$ (see Fig.~\ref{fig:lemma-tree}b), 
then $z\in \T(G)$
without being 
on an  arc bisector, contradicting 
our assumption about $\T(G)$.
Thus, no 
point of an arc bisector involving $\alpha$  can be on $r(x,s_\alpha)$.
Since  $\T(G)$ contains only  
arc bisectors, it follows that  
the entire ray $r(x,s_\alpha) \subset \freg(\alpha)$.

\begin{figure}

\centering
\begin{minipage}{0.32\linewidth}
\centering 
\includegraphics[page=1]{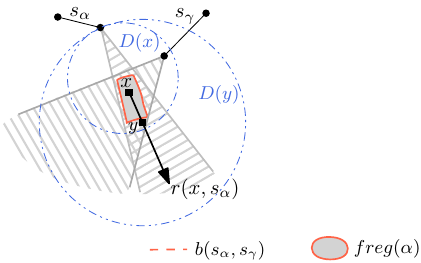}
\\
(a)
\end{minipage}
\hfill
\begin{minipage}{0.32\linewidth}
\centering
\includegraphics[page=2]{lemma-tree-1}
\\
(b)
\end{minipage}
\begin{minipage}{0.32\linewidth}
\centering
\includegraphics[page=3]{lemma-tree-1}
\\
(c)
\end{minipage}
\caption{Illustration for the proof of Theorem~\ref{theorem:fvd-is-tree}
}
\label{fig:lemma-tree}
\end{figure}

It remains to show that $\T(G)$ is connected. Suppose otherwise.  
Then there is a face in $\fvd(G)$,  which is unbounded along two
directions that belong to two  different non-consecutive arcs of $G$.
That is, there are non-consecutive arcs
$\alpha_1,\alpha_2$, 
such that  $\freg(\alpha_1)$ and $\freg(\alpha_2)$ are connected by a
path (see Fig.~\ref{fig:lemma-tree}c).
But then a point $x$ would exist on this path such that $x \in
\freg(\alpha_1)\cap \freg(\alpha_2)$, contradicting
Lemma~\ref{lemma:freg-disjoint}.

If $G$ is not simplified, i.e., $G\neq \Gs$,
then $\T(G)$ can be  derived from $\T(\Gs)$  
by adding the portions of artificial bisectors
within the regions of maximal arcs.
These artificial bisectors involve consecutive arcs of the same
segment,
thus, by their definition, they are half-infinite polygonal lines;
they may be a single ray emanating from
the segment endpoint or they may  actually contain the segment
itself, followed by the ray.
By their definition, these rays cannot intersect transversally, nor can 
they intersect the region boundary, other than the Voronoi vertex on
the boundary that they emanate from, see e.g. Fig.~\ref{fig:base-fvd} where one such artificial bisector is shown.
Thus, after adding artificial bisectors within each region of a
maximal arc, the structure of  $\fvd(G)$ remains a tree, assuming that 
 $\fvd(\Gs)$  is a tree.
\qed
\end{proof}

We remark that for an arbitrary
subsequence $G$ of $\gm(S)$, $\T(G)$ need not necessarily be proper.
The attainable regions of $G$ need not even cover the
plane. In this case  $\T(G)$ 
may contain a two-dimensional region as shown in Fig.~\ref{fig:bad-cases}(a).
However, the intermediate diagrams for arc sequences produced by our
algorithms are always proper.

We also remark that in a proper $\T(G)$, an edge separating
regions of the same segment may coinside with a portion of the 
segment itself, see e.g., Fig.~\ref{fig:bad-cases}(b). 
This edge is a  portion of an (artificial)  arc bisector between  arcs of the same segment. 
%Refer to Fig.~\ref{fig:bad-cases}(b).

\begin{figure}
\centering 
\includegraphics{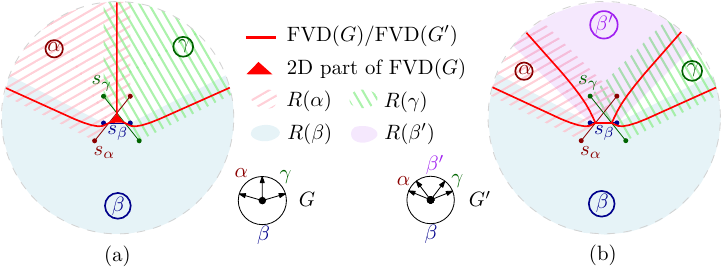}
\caption{(a) A non-proper Voronoi diagram; attainable regions do not cover the plane. (b) A proper Voronoi diagram containing an edge, which coinsides with a portion of the segment $s_\beta$.}
\label{fig:bad-cases}
\end{figure}

\section{A deletion and insertion operation in a sequence of arcs}
\label{sec:del-ins}

Throughout our algorithms we use a deletion and an insertion operation for
original arcs in sequences derived from $\gm(S)$.
The deletion operation produces subsequences of $\gm(S)$; the
insertion operation creates \emph{proper augmented 
subsequences} of $\gm(S)$ and introduces \emph{new} arcs.
The insertion operation for an original arc $\beta$ of $\gm(S)$ corresponds to inserting
$\freg(\beta)$ in the Voronoi diagram computed so far.

 \subsection{Arc deletion}
 \label{sec:deletion}
Let $G$ be a subsequence of $\gm(S)$ and let $\beta$ be an arc in $G$.
Let $G \ominus \beta$ denote the arc sequence derived from
 $G$  after deleting $\beta$. 

When  deleting $\beta$ from $G$, the neighboring
arcs $\alpha$ and $\gamma$ \emph{expand} over $\beta$, see Fig.~\ref{fig:deletion}. 
Either both $\alpha$ and $\gamma$ expand (see Figs.~\ref{fig:deletion}a,c,d) 
or one expands while the other shrinks (see Fig.~\ref{fig:deletion}b).
During the expansion, $\alpha$ or/and $\gamma$  may change from
being a single-vertex arc to being a segment arc. 
Since $\alpha$ and $\gamma$ are original arcs, both
remain  in $G\ominus \beta$.
Their common endpoint $\nu(\alpha,\gamma)$ is determined as follows:

\begin{enumerate}
\item 
If $s_\alpha \neq s_\gamma$, then $\nu(\alpha,\gamma)$ is the first direction of $b(s_\alpha,s_\gamma)$ encountered as we move on $K$
 from $\alpha^*$ to $\gamma^*$ going over the (deleted) arc
 $\beta$. 
 In Fig.~\ref{fig:deletion}c, 
$\nu(\alpha,\gamma)$ is shown as an unfilled circle;
the three (possible) directions of 
$b(s_\alpha,s_\gamma)$ are shown as square marks. 
\item 
If $s_\alpha = s_\gamma \neq s_\beta$, then  $\beta$ must be a
segment arc. We set $\nu(\alpha,\gamma)$
to $\nu(\beta)$, which is the hull direction of $s_\beta$ within $\beta$,
see Fig.~\ref{fig:deletion}d.
Note that setting $\nu(\alpha,\gamma)$ defines the  artificial arc
bisector $b(\alpha,\gamma)$ to contain a ray in the direction
$\nu(\alpha,\gamma)$. 
\item 
If $s_\alpha = s_\beta = s_\gamma$, let $\nu(\alpha,\gamma)= \nu(\alpha,\beta)$.
\end{enumerate}

\begin{figure}[h]
\centering
\includegraphics[page=2]{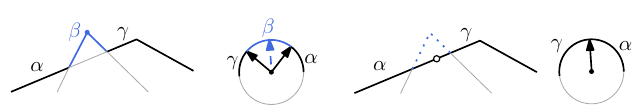} \\
(a)\\
\includegraphics[page=3]{deletion-rev}\\
(b)\\
\includegraphics[page=4]{deletion-rev}\\
(c) \\
\includegraphics[page=1]{deletion-rev} \\
(d)
\caption{The deletion process: $G=\alpha\beta\gamma$ dual and primal image (left); 
the result of deletion of $\beta$ (right)}
\label{fig:deletion}
\end{figure}

\subsection{Arc insertion}
\label{sec:insertion}

Let $G'$ be a proper augmented subsequence of $\gm(S)$ and let $\beta^*$
be a core arc of $\gm(S)$ that does  not appear in $G'$.
Let $G' \oplus \beta^*$ (equiv.  $G' \oplus \beta$) denote 
the arc sequence that results after inserting $\beta^*$ in $G$, and
let $\beta$ denote the expanded version of $\beta^*$ ($\beta\supseteq
\beta^*$ ) as it appears in the resulting subsequence.
Without loss of generality, assume that $\beta^*$ is on the upper
$\gm(S)$, thus, in dual space, $\beta^*$ is on the boundary of the lower
wedge $w$ of $s_\beta$.

The insertion of $\beta^*$ in $G'$ is easy to follow in dual space,
considering a portion of the lower wedge of $s_\beta$.
The endpoints of $\beta$ in dual space are exactly the first intersections of
$w$ with
(the dual of) $G'$, as we walk along $w$ in two
directions, starting at $\beta^*$.
Any portion of $G'$ below $\beta$ is deleted from  $G'$
and is substituted by $\beta$ in $G' \oplus \beta$;
see e.g., the left part of Fig.~\ref{fig:deletion}a,c,d and
Fig.~\ref{fig:reinsert2}.
As Fig.~\ref{fig:deletion} illustrates deletion, $G'$  is shown in the
right part of the figure.
Inserting back $\beta^*$ in $G'$ results
in the original sequence $\alpha\beta\gamma$ in most cases, except
Fig.~\ref{fig:deletion}b.
The insertion of $\beta^*$ in this case is shown in Fig.~\ref{fig:reinsert2}.

\begin{figure}
\centering
\includegraphics{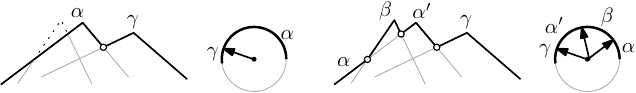}
\caption{Re-insertion of arc $\beta$ for the Fig.~\ref{fig:deletion}b}
\label{fig:reinsert2}
\end{figure}

Let $\alpha,\gamma$ be two consecutive original arcs in $G'$, such
that $\beta^*$ is between $\alpha^*$ and $\gamma^*$ in $\gm(S)$.
Assume a counterclockwise traversal. 
For simplicity, let us first assume that $\alpha$ and $\gamma$
are consecutive in $G'$, however, note that a number of new arcs may
be present between them.
The endpoints of $\beta$, $\nu(\alpha,\beta)$ and
$\nu(\beta,\gamma)$, can be determined by a simple case
analysis as described below. 
They correspond to the two infinite rays bounding $\freg(\beta)$ in
$\fvd(G'\oplus\beta)$.

\paragraph{Inserting $\beta$ between $\alpha,\gamma$
  in $G'$; a case analysis.} 
\begin{enumerate}
\item Suppose $s_\alpha\neq s_\beta, s_\gamma\neq s_\beta$, and $\nu(\alpha,\gamma)\in \beta^*$  (note that $s_\alpha$ might 
or might not equal $s_\gamma$).
Then $\nu(\alpha,\beta)$ is the first direction of
$b(s_\alpha,s_\beta)$ encountered as we walk on
  $K$, starting at $\nu(\alpha,\gamma)$, and moving towards $\alpha^*$. 
  Symmetrically for $\nu(\beta,\gamma)$.
Note that in $G'$, $\nu(\alpha,\beta)\in \alpha$ and $\nu(\beta,\gamma)\in
\gamma$ as $\alpha, \beta, \gamma$ are all original arcs.
See e.g., Fig.~\ref{fig:deletion}a,c,d in the right-to-left direction
and Fig.~\ref{fig:insert}a.

\item Suppose $s_\alpha\neq s_\beta, s_\gamma\neq s_\beta$, but
  $\nu(\alpha,\gamma)\not \in \beta^*$. Then  $\beta^*$ is entirely
  contained in either $\alpha$ or  $\gamma$, see
  Fig.~\ref{fig:reinsert2}. 
Suppose without loss of generality that $\beta^*\subset\alpha$.
 Consider the two consecutive directions,  $\nu_1$,  $\nu_2$, of
 $b(s_\alpha, s_\beta)$ that surround $\beta^*$. 
Since $\beta^* \subset \alpha$, at least one of these directions, say  $\nu_1$, is in $\alpha$;  
let $\nu(\alpha,\beta) = \nu_1$.
\begin{itemize}
\item If $\nu_2 \not\in \alpha$, then 
$\nu(\beta,\gamma)$
is determined as in case 1. 
\item
If $\nu_2 \in \alpha$,
then $\alpha$ is split by $\beta$ into two parts: let $\alpha'$ be the
piece of $\alpha$ neighboring  $\gamma$; let
 $\nu(\alpha',\beta) = \nu_2$ and $\nu(\alpha',\gamma)= \nu(\alpha,\gamma)$.
 Inserting $\beta$ in $G'$ results in substituting  $\alpha\gamma$ by 
 $\alpha\beta\alpha'\gamma$, 
creating a \emph{new} arc $\alpha'$. 
In $\fvd(G'\oplus\beta)$,
$\freg(\beta)$ splits $\freg(\alpha)$ into  two regions:
$\freg(\alpha)$ and  $\freg(\alpha')$.
\end{itemize}

\item  Suppose $s_\alpha=s_\beta$ (symmetrically, if $s_\beta=s_\gamma$). Then 
$\alpha$ is split in two parts by $\nu(\alpha,\beta)$, and the part
 containing $\beta^*$ is called $\beta$. 
Note that $\nu(\alpha,\beta)$ was determined when $\alpha^*$ and
$\beta^*$ became consecutive in some deletion operation.
Note also that 
$\alpha^*,\beta^*$ cannot be neighbors in $\gm(S)$.
In $\fvd(G')$, $\freg(\alpha)$ is simply split in two parts by the artificial
bisector, which is  implied by  $\nu(\alpha,\beta)$: one part remains
$\freg(\alpha)$ and the other part becomes $\freg(\beta)$. 
\end{enumerate}

Fig.~\ref{fig:insert} shows the insertion of an arc $\epsilon$ in the
diagram of Fig.~\ref{fig:base-fvd}a; Fig.~\ref{fig:insert}a
illustrates case~1 and Fig.~\ref{fig:insert}b illustrates case~2.

We now remove the assumption that 
$\alpha$ and $\gamma$ are consecutive in $G'$. 
The procedure to insert $\beta$ is as follows: 
Starting with any point in $\beta^*$, we move counterclockwise on $G'$ until we
encounter an arc $\delta$ with $\nu(s_\delta,s_\beta)$ in $\delta$. 
Note that $\nu(s_\delta,s_\beta)$ is a direction of $b(s_\delta,s_\beta)$.
Note also that $\delta$ may equal $\alpha$.
Clearly, any arcs visited prior to $\delta$ (if any) 
must be deleted. 
Symmetrically, we  move clockwise on $G'$,  until we encounter an arc
$\epsilon$ that contains $\nu(s_\beta,s_\epsilon)$. Since $\beta^* \in \gm(S)$,
for any arc $\omega \in G'$ the directions of $b(s_\beta,s_\omega)\not
\in \beta^*$. 
Arcs $\delta,\epsilon$ play the role of $\alpha,\gamma$ respectively
in  computing $\beta$.

\begin{figure}[h]
\centering
\begin{minipage}{0.49\linewidth}
\centering
\includegraphics[page=3]{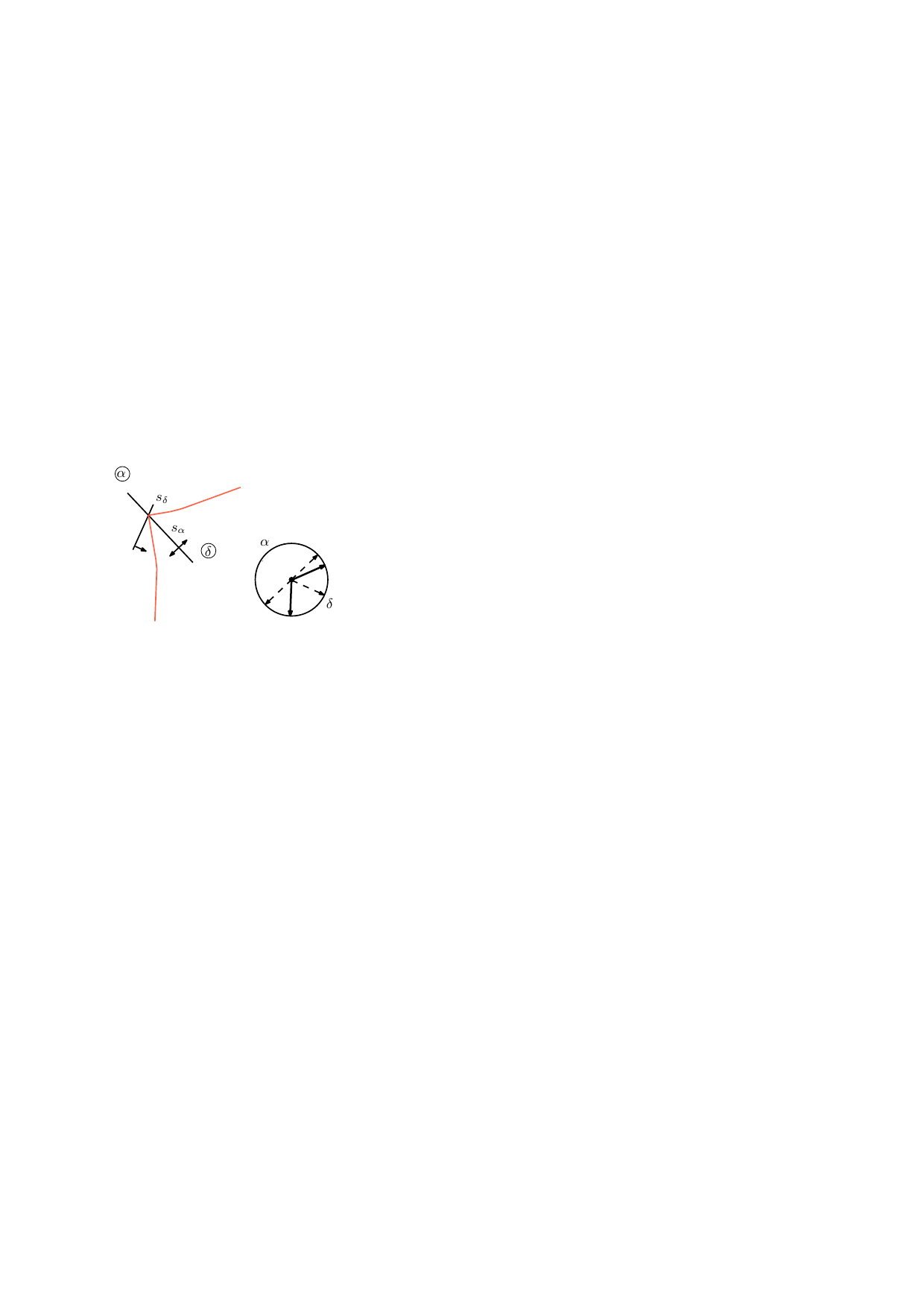} \\
(a)
\end{minipage}
\begin{minipage}{0.49\linewidth}
\centering
\includegraphics[page=4]{fvd-insert-3} \\
(b)
\end{minipage}

\caption{ 
 The result of inserting arc $\epsilon$ in the diagram of Fig.~\ref{fig:base-fvd}a; 
the insertion corresponds to (a) case  1; (b) case 2. 
}
\label{fig:insert}
\end{figure}

The endpoints of $\beta$ indicate  the startpoints of  $\partial \freg(\beta)$.
Once  they are known,  
$\partial \freg(\beta)$  can be traced  within $\fvd(G')$ in a standard
way.
In particular, we start with one of the startpoints, say
$\nu(\beta,\gamma)$, 
which indicates an unbounded portion  of $b(\beta,\gamma)$,
and trace $\partial \freg(\beta)$ within $\freg(\gamma)$.
As soon as 
$b(\beta,\gamma)$ hits  $\partial \freg(\gamma)$ at a point $x$,  
we start following $b(\beta,\delta)$ in $\freg(\delta)$, where
$\freg(\delta)$ is adjacent to $\freg(\gamma)$ at point $x$.
This process results in tracing a \emph{merge curve}
$m(\beta)$ through $\fvd(G')$, which starts 
at one endpoint of $\beta$ and ends at its second endpoint. 
We denote the traced curve $m(\beta)$; note that it is either the whole curve traced at once, or a concatenation of the two branches meeting at the same point on $s_\beta$.  
We assign $\freg(\beta)$ to be the part of $\fvd(G')$ that is enclosed by
$m(\beta)$ and is unbounded in the directions of $\beta$.
Below we prove correctness of arc insertion: Lemma~\ref{lemma:proper-insert} shows that $m(\beta)$ is \emph{proper}, 
and Lemma~\ref{lemma:correct-insert} shows that the region computed as a result of the insertion operation is 
indeed $\freg(\beta)$ in $\fvd(G'\oplus \beta)$.

\begin{lemma}
\label{lemma:proper-insert}
If $\fvd(G')$ is proper, then the merge curve $m(\beta)$ 
is 
a  simple
connected curve, consisting solely of arc
bisectors involving $\beta$, i.e., $m(\beta)$ is proper.  
\end{lemma}

\begin{proof}
Let $\alpha$ and $\gamma$ be the two neighbors of $\beta$ in
$G'\oplus\beta$; they may be original or new arcs.

If the simplified sequence $G'_s$ is a single maximal arc (i.e.,
$\fvd(G'_s) =\emptyset$ and $s_\alpha=s_\gamma$)
then $m(\beta)$ is a
single branch  of $b(s_\alpha,s_\beta)$.
If $s_\beta = s_\alpha$ or $s_\beta = s_\gamma$ then $G'_s=
(G'\oplus\beta)_s$, thus,
$m(\beta)$ consists of a portion of an artificial arc
bisector ($b(\alpha,\beta)$ or $b(\beta,\gamma)$)
and a portion of a region  boundary in $\fvd(G')$ 
(resp. $\partial \freg(\alpha)$ or $\partial\freg(\gamma)$). Thus, 
$m(\beta)$ is proper.

\begin{figure}
\begin{minipage}{0.49\linewidth}
\centering
\includegraphics[page=1, width=\textwidth]{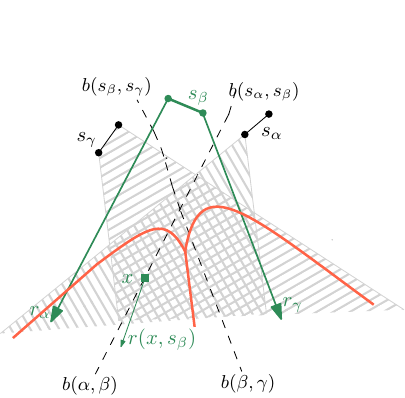}
\\
(a)
\end{minipage}
\hfill
\begin{minipage}{0.49\linewidth}
\centering
\includegraphics[page=2,width=\textwidth]{insert-proof-1}
\\
(b)
\end{minipage}
\caption{Illustrations for the proof of Lemma~\ref{lemma:proper-insert}: 
insertion of arc $\beta$ in $\fvd(G')$ between arcs $\alpha$ and $\gamma$. 
Three segments $s_\alpha, s_\beta, s_\gamma$; 
rays $r_\alpha, r_\gamma$; bisectors $b(\alpha,\beta)$ and $b(\beta,\gamma)$ in dashed lines. 
(a) $R(\alpha)$ and $R(\gamma)$
in $\fvd(G')$ marked by rising and falling tiling pattern; 
$\partial \freg(\alpha)$ and $\partial \freg(\gamma)$ in $\fvd(G')$ in red lines. 
(b) $m(\beta)$ (thick purple lines), the deleted part of $\fvd(G')$ shown in dotted lines. 
}
\label{fig:insert-proof}
\end{figure}

From now on we assume that $s_\alpha\neq s_\beta$, $s_\gamma\neq s_\beta$,  and $\fvd(G'_s)\neq\emptyset$.
Consider the attainable region $R(\beta)$ and let $r_\alpha$,
$r_\gamma$ denote the two open rays bounding it, one in the direction of $\nu(\alpha,\beta)$,
and one in
the direction of $\nu(\beta,\gamma)$, respectively, see Fig.~\ref{fig:insert-proof}. 
We will show that $m(\beta)$ can never hit $r_\alpha$ or $r_\gamma$,
thus, $m(\beta)\subseteq R(\beta)$, and that the vertices of $m(\beta)$ are
proper defined by arc bisectors involving $\beta$.

We perform a bi-directional tracing of $m(\beta)$ starting at
$b(\alpha,\beta)$ and at $b(\beta,\gamma)$.
Consider some intermediate step of the bi-directional tracing. There are two chains of  $m(\beta)$
traced so far: one starting from $b(\alpha,\beta)$ and one starting
from $b(\beta,\gamma)$; we call them  respectively the \emph{left} and the
\emph{right} chain and orient them as we walk away
from their starting bisectors. Suppose these chains are proper, i.e., their edges are portions of arc
bisectors involving $\beta$ and their vertices  are proper, but the
endpoints of their last edges, $e_\ell$  and $e_r$ %for
% the left and right chain
respectively, have not been determined yet.
Edge $e_\ell$ 
cannot hit $r_\alpha$  and $e_r$ cannot hit $r_\gamma$, because  of the visibility property of a segment bisector:
for each point $x$ on the
bisector, the entire ray $r(x,s_\beta)$ is in one side of the
bisector, the side opposite to the one  containing $s_\beta$.
Thus, $r(x,s_\beta)$
must be entirely to the right side of the left chain of $m(\beta)$.
%because if it intersected  $r_\alpha$ at a
%point $p$, then ray $r(p,s_\beta)$ would be aligned with $r_\alpha$
%being to the left of the left chain of $m(\beta)$ deriving a contradiction.

In addition,
neither $e_\ell$ nor $e_r$ can extend to infinity (towards their orientation) within $R(\beta)$.
In fact no arc bisector involving $\beta$, other than
$b(\beta,\alpha), b(\beta,\gamma)$,
can extend to infinity
within $R(\beta)$ because the directions included in $\beta$, other
than its endpoints,  are
distinct from any other arc.

Let $\delta$ and $\phi$ be the arcs in $G'\oplus \beta$ that together
with $\beta$ define respectively $e_\ell$  and $e_r$.
We show that at least  one of the
following holds:
(1) $e_\ell$ hits $\partial\freg(\delta)$ 
before it hits 
 $r_\gamma$; or
(2) $e_r$ hits $\partial\freg(\phi)$ 
before it hits 
$r_\alpha$.
Suppose (1) does not hold, i.e.,  $e_\ell$ hits $r_\gamma$ first. 
Then  the entire (open) ray $r_\alpha$ is outside $\freg(\phi)$, 
and $e_r$ must exit $\freg(\phi)$, i.e., it must hit
$\partial\freg(\phi)$, before it can hit $r_\alpha$, thus, (2) holds. 
Symmetrically, if (2) does not hold, then (1) must hold. 
Since $\partial\freg(\delta)$   and $\partial\freg(\phi)$ are proper,
the intersection  point of either (1) or (2) is the intersection of two arc
bisectors, and thus, a valid vertex.
Note that $m(\beta)$ might hit $s_\beta$; the point where it happens
is on an artificial arc bisector involving another arc of $s_\beta$.

By tracing $m(\beta)$ this way, we follow two
chains within $R(\beta)$ formed by portions of arc bisectors involving $\beta$ separated by Voronoi vertices.  
By the visibility properties of the arc bisectors, 
no Voronoi vertex 
can be visited twice and there is finite number of vertices; moreover neither of the two chains can 
end at infinity within $\beta$. Therefore, the two chains meet, making a single simple path in the arrangement. 
This completes the argument.
\qed
\end{proof}

By Lemma~\ref{lemma:proper-insert}, $m(\beta)$ subdivides the plane into two regions; one of which 
is unbounded in the directions that correspond to arc $\beta$ in $G'\oplus \beta$.
Let us denote the interior of this region $reg(\beta)$. %this region $reg(\beta)$; let us define this region to be open, i.e., we don't include the points on $m(\beta)$ in $reg(\beta)$.
The following lemma completes the proof of correctness of the arc insertion procedure. 

\begin{lemma}
\label{lemma:correct-insert}
If $\fvd(G')$ is proper,  $reg(\beta)$ equals $\freg(\beta)$ in
$\fvd(G'\oplus \beta)$. 
\end{lemma}

\begin{proof}
 By  construction of $m(\beta)$ and by the properties of arc bisectors, for any point $x\in reg(\beta)$ we
 have $d(x,\beta) > d(x,\delta)$, where $\delta$ is any  arc in $G'\oplus \beta$ except $\beta$. 
Now suppose there is a point $x \in \freg(\beta)$ in  $\fvd(G'\oplus \beta)$, such that $x \not\in reg(\beta)$. Then the ray $r(x,s_\beta)$ 
intersects the boundary of $\freg(\beta)$ at some point $y \in b(\beta,\delta)$ for some arc $\delta \in G'$. 
However, every point in $r(x,s_\beta)$ is further from $\beta$
than from any other arc in $G' \oplus \beta$, by the argument from the proof of Theorem~\ref{theorem:fvd-is-tree} (see also Fig.~\ref{fig:lemma-cover}a).
We obtain a contradiction. The claim of the lemma follows.
\qed
\end{proof}

\begin{lemma}
\label{lemma:insertion-time}
The diagram $\fvd(G'\oplus \beta)$ can be computed 
from $\fvd(G')$ in $O(|\partial \freg(\beta)| + |\partial \freg(\omega')| + d(\beta))$ time, if a new arc $\omega'$ is created by the insertion of $\beta$.
If no new arcs are created, $\fvd(G'\oplus \beta)$ can be computed in $O(|\freg(\beta)| + d(\beta))$ time. 
Here  $d(\beta)$ is the number of new arcs that get deleted by the insertion of $\beta$.
\end{lemma}

\begin{proof}
To insert $\beta$ into $\fvd(G')$, we first determine the neighbors of $\beta$ in $G' \oplus \beta$
by tracing $G'$ in both directions starting from $\beta^*$. 
Since any visited arc gets deleted from the arc sequence, and since tracing requires $O(1)$ time per visited arc, 
this step requires overall $O(d(\beta))$ time. 

Now we trace $\partial \freg(\beta)$. 
Suppose first that  no new arcs are created by the insertion of $\beta$. This means, that the
neighbors of $\beta$ in $\fvd(G'\oplus \beta)$ are two different arcs.
Thus there is at least one unbounded edge of $\fvd(G')$ that gets deleted  by the insertion of 
$\beta$ (e.g., any of the unbounded edges between the regions of the neighbors of $\beta$). 
We trace $\partial \freg(\beta)$ starting  from this edge in  the standard way,  
see e.g.~\cite{CGbook}. The time complexity of such tracing is proportional to $|\partial\freg(\beta)|$ 
plus the total complexity of the Voronoi regions of the new
arcs that get deleted. Such edges 
form a forest of total complexity $O(d(\beta))$. Thus the tracing requires $O(|\partial\freg(\beta)| + d(\beta))$ time. 

Suppose now that the insertion of $\beta$ caused the creation of a new arc $\omega'$. In this case, no unbounded edge of $\fvd(G')$ is deleted, 
and we trace a number of edges on $\partial \freg(\omega')$ in order to find a starting 
point on $\partial\freg(\beta)$. Thus the tracing in this case requires 
$O(|\partial \freg(\beta)| + |\partial \freg(\omega')|)$ time. \qed
\end{proof}

\section{A Randomized Linear Construction}
\label{sec:randomized}

We present 
an expected linear-time algorithm to compute
$\fvd(S)$, given $\gm(S)$, inspired by the simple two-phase
randomized approach for points in convex
position~\cite{Chew90}, using the structures of Secs.~\ref{sec:fvd}
and~\ref{sec:del-ins}. 
Let $\alpha_1,\alpha_2,\ldots, \alpha_h$ be a random permutation of
 arcs in $\gm(S)$, and 
let $A_i=\{\alpha_1,\alpha_2,\ldots, \alpha_i\}$, $1\leq i\leq h$, be
the  set of the first $i$ arcs in this permutation. 
Let $t$ be the largest index  such that $A_t$
consists of arcs of only two segments, which form exactly two maximal
arcs in the corresponding subsequence of $\gm(S)$.

The algorithm proceeds in two phases as follows.

\paragraph{Phase~1 (deletion phase).}
We  compute the subsequence
$G_i$, $t \leq i < h$, where  $G_h = \gm(S)$, and 
$G_i$ is obtained from
$G_{i+1}$ by deleting arc $\alpha_{i+1}$ as described in
Sec.~\ref{sec:del-ins} ($G_i = G_{i+1} \ominus \alpha_{i+1}$).
The two neighbors of $\alpha_{i+1}$ in $G_{i+1}$ are recorded as a
tentative re-entry point for $\alpha_{i+1}$ at Phase~2.
At the end of Phase~1, we obtain $G_t$, whose simplified version
consists of exactly two maximal arcs.

\paragraph{{Phase~2 (insertion phase)}.} 
We incrementally compute $G'_i$  and $\fvd(G_i')$, for $t < i \leq h$, starting with 
$\fvd(G_{t})$. 
$G_t$ consists only of two maximal arcs, thus, $\fvd(G_{t})$ is proper
and can be computed in time $O(t)$: compute first $\fvd(G_{t}^s)$;
then split its regions by artificial
bisectors as indicated by the endpoints of individual arcs (see e.g.,
Fig.~\ref{fig:base-fvd}).
In addition, enforce our convention on segment arcs:  
if an arc of $G_{t}$ contains both hull directions of its segment, split
it in two consecutive arcs by the direction of a ray through the
segment, and split its region by an artificial arc bisector.
$\fvd(G_{i+1}'), i \geq t,$ is derived from $\fvd(G_i')$ by inserting arc
$\alpha_{i+1}$ ($G'_{i+1}=G'_i\oplus\alpha_{i+1}$) as detailed in
Sec.~\ref{sec:del-ins}.  
At the end of Phase~2, we obtain $\fvd(G_h)=\fvd(\gm(S))$. 

\begin{lemma}
\label{lemma:two-new-primes}
The number of arcs in $G_i'$ is at most 
$2i$  and  the complexity of
$\fvd(G_i')$ is $O(i)$. 
\end{lemma}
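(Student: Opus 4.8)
The plan is to argue by induction on $i$, from $i=t$ up to $h$, tracking separately the count of \emph{original} arcs and \emph{new} arcs in $G_i'$. The key structural fact I would isolate first is: every new arc is created only during an insertion step, and each insertion of an arc $\beta=\alpha_{i+1}$ creates \emph{at most one} new arc. Indeed, by the case analysis of phase~2 (and the arc-insertion discussion of Section~\ref{sec:fvd}), inserting $\beta$ falls into one of the enumerated cases: in the ``standard'' case ($s_\alpha\neq s_\beta\neq s_\gamma$ with $\nu(\alpha,\gamma)\in\beta$) no new arc is born; in the ``shrinking'' case one new arc $\alpha'$ (or $\gamma'$) of $s_\alpha$ (resp.\ $s_\gamma$) appears; in the ``containment'' case the single arc $\sigma$ entirely containing $\beta$ is split, producing one new arc $\sigma_2$; and in the case $s_\alpha=s_\beta$ the arc $\alpha$ is split by an artificial bisector, producing one new arc. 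Crucially, in every case the insertion may also \emph{delete} some consecutive new arcs (those whose regions get enclosed in $\freg(\beta)$), but it never deletes an original arc. So each of the $i-t$ insertion steps performed to reach $G_i'$ adds a net of at most one new arc and exactly one original arc ($\beta$ itself, which is original).

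Carrying out the induction: the base case is $G_t'=G_t$, which consists of arcs of exactly two segments forming exactly two maximal arcs, but possibly $t$ sub-arcs in total — all original, no new arcs. So $G_t'$ has $t$ arcs, $0$ of which are new, and $t\le 2t$ holds trivially; $\fvd(G_t')$ has complexity $O(t)$ since it is the diagram of two segments subdivided by at most $t$ artificial bisectors. For the inductive step, assume $G_i'$ has $i$ original arcs and at most $i-t$ new arcs, for a total of at most $i + (i-t) = 2i - t \le 2i$ arcs. Passing to $G_{i+1}'$ we add one original arc and at most one new arc, giving at most $(i+1)$ original and at most $(i+1-t)$ new, hence at most $2(i+1)-t \le 2(i+1)$ arcs. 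A tree-like planar diagram whose leaves/unbounded directions are indexed by the arc sequence has combinatorial complexity (vertices, edges, faces) linear in the number of arcs — this follows from Lemma~\ref{lemma:fvd-is-tree}, since $\T(G_{i+1}')$ is a tree and each of its leaves corresponds to an unbounded Voronoi edge, hence to a consecutive pair of arcs. Therefore $\fvd(G_{i+1}')$ has complexity $O(i+1)$, completing the induction.

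The main obstacle I anticipate is the bookkeeping in the ``$\alpha$ and $\gamma$ are no longer neighbors'' case, where a whole block of consecutive new arcs $\delta',\ldots,\epsilon'$ between $\alpha$ and $\gamma$ may be traversed and several of them deleted when $\freg(\beta)$ is inserted. I need to be careful that this mass deletion is consistent with the ``at most one new arc created per insertion'' invariant — it is, because deletions only help the count — but I should also make sure that \emph{no original arc} sits among those deleted arcs; this is exactly the assertion ``Clearly no original arc can be deleted by the insertion of $\beta$'' from phase~2, which I would justify by noting that an original arc $\alpha_j$ with $j\le i$ has already been re-inserted with its own attainable region, and $\freg(\beta)\subseteq R(\beta)$ is disjoint (outside $s_\beta$) from the attainable region of any arc of a different segment, so $\freg(\beta)$ cannot swallow the region of an original arc — while for an original arc of the \emph{same} segment $s_\beta$, Remark~\ref{rem:att-disj} and Lemma~\ref{lemma:fvd-is-tree} (non-consecutive arcs of one segment have non-neighboring regions) prevent it. Once that point is nailed down, the two bounds follow immediately from the per-step increment argument above.
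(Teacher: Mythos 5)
Your proof takes essentially the same route as the paper's: count arcs via ``at most one new arc per insertion of one original arc,'' and bound the diagram's complexity by observing its graph structure is a tree with bounded degree. The paper arrives at the $O(i)$ complexity bound in exactly the way you sketch, and your added bookkeeping of original versus new arcs (giving the sharper $2i-t$) and your justification of ``no original arc is ever deleted'' are refinements, not a different method.

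One small gap to close: your complexity argument invokes Lemma~\ref{lemma:fvd-is-tree} to conclude that $\T(G_{i+1}')$ is a tree, but that lemma's hypothesis is that the arc sequence be \emph{proper}. You never establish that $G_{i+1}'$ is proper. The paper handles this explicitly: it notes that the base sequence $G_t$ (two maximal arcs) is proper, and then applies Lemma~\ref{lemma:proper-insert} inductively to conclude every $G_i'$ produced in phase~2 is proper, after which Lemma~\ref{lemma:fvd-is-tree} applies. Insert that one-line properness induction (citing Lemma~\ref{lemma:proper-insert}) before invoking the tree lemma and the argument is complete.
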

\begin{proof}
By Lemma~\ref{lemma:proper-insert}, at each step of Phase 2, 
we are creating at most one new arc.  
Thus, the total number of arcs in $G_i'$ ($|G_i'|$) is at most $2i$.
Since $G_t$ is a base subsequence of $\gm(S)$, $\fvd(G_t)$ is  proper. 
Applying  
Lemmas~\ref{lemma:proper-insert} and~\ref{lemma:correct-insert} for each step of Phase~2, we obtain that 
$\fvd(G_i')$ must also be proper.
By Theorem~\ref{theorem:fvd-is-tree}, $\T(G_i')$ is a tree, with vertices of
degree at least three. Since $\fvd(G_i')$ has exactly one face per arc of
$G_i'$ and $|G_i|\leq 2i$, the claim follows from the Euler's formula.
\qed
\end{proof}

\begin{lemma}
\label{lemma:exp-const-primes}
The expected number of new arcs traced at any step of Phase~2 is
$O(1)$. 
\end{lemma}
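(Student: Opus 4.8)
The plan is to use a standard backward analysis argument on the random insertion order of phase~2, in the spirit of Chew's analysis for points in convex position~\cite{Chew90}, but adapted to account for the fact that each insertion may spawn a new arc and that a single insertion may delete an entire block of consecutive new arcs. The key accounting trick will be to charge the new arcs scanned during the insertion of $\beta = \alpha_{i+1}$ against the change in complexity of $\T(G_i')$ — specifically, against arcs that are destroyed — and then to bound the expected destruction by backward analysis.

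First I would fix the state $G_i'$ obtained after inserting the first $i$ arcs (in the random order restricted to $A_h$, modulo the initial block of size $t$). Condition on the unordered set $A_i$; then $\alpha_{i+1}$ is uniformly distributed over the $i - t$ arcs of $A_i \setminus A_t$ (the first $t$ are always the base case and are never the "last" inserted one). The quantity I want to bound is the number of new arcs traced when inserting $\alpha_{i+1}$ into $G_i'$. Observe that, by Lemma~\ref{lemma:two-new-primes}, $\fvd(G_i')$ has $O(i)$ total complexity and $\T(G_i')$ is a tree; I would argue that the new arcs scanned while locating the entry point for $\beta$ lie along a connected portion of $\T(G_i')$ between the stored tentative neighbors $\alpha,\gamma$ from phase~1, and that each such new arc's region is either entirely deleted by the insertion of $\freg(\beta)$ (cf.\ the case analysis in phase~2, where consecutive new arcs enclosed in $\freg(\beta)$ vanish) or is one of the two arcs immediately bounding the entry point. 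So the number of traced new arcs is at most $2 + (\text{number of new arcs whose region is destroyed by inserting }\beta)$.

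The backward-analysis step is then: summing over $i$, the total number of new-arc regions destroyed across all of phase~2 is at most the total number of new arcs ever created, which by Lemma~\ref{lemma:two-new-primes} is at most $h \le n$ — one per insertion. Hence $\sum_{i} (\text{new arcs destroyed at step } i) = O(h)$ deterministically. To get the \emph{per-step expected} $O(1)$ bound I would instead argue symmetrically: removing a uniformly random arc $\alpha_{i+1}$ from $G_i'$ destroys in expectation $O(1)$ new arcs, because the new arc created when $\alpha_{i+1}$ was (or would be) inserted is charged to $\alpha_{i+1}$ alone, and a new arc $\alpha'$ that was created by some insertion and later destroyed by a uniformly random one of the $\le 2i$ faces being the one removed contributes probability $O(1/i)$; summing the $O(i)$ faces gives $O(1)$. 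Combined with the "$2 + {}$destroyed" bound, the expected number of new arcs traced at step $i$ is $O(1)$.

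The main obstacle I anticipate is making rigorous the geometric claim that the scanned new arcs form a \emph{connected} chain along $\T(G_i')$ bounded by the two phase-1 neighbors — i.e.\ that the search for the entry point does not wander — and that every scanned new arc other than the two endpoints is genuinely deleted by the insertion (not merely shrunk). This requires invoking the properness of $G_i'$ (Lemma~\ref{lemma:proper-insert}) together with the visibility/monotonicity structure of segment bisectors used in the proof of Lemma~\ref{lemma:proper-insert}, and a careful treatment of the artificial-bisector cases ($s_\alpha = s_\gamma$, $s_\alpha = s_\beta$), where no new arc region is destroyed but also no scanning of new arcs beyond a constant amount is needed. Once that structural lemma is in place, the probabilistic accounting is routine backward analysis.
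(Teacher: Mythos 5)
Your proposal takes a genuinely different—and more complicated—route than the paper, and the extra complication is exactly where the gap lies. The paper's own argument does not track arc destruction at all: it observes that the new arcs traced when inserting $\beta=\alpha_{i+1}$ are precisely the $n_j$ new arcs lying in the single gap of $G_i'$ between the stored original-arc pair $(\alpha,\gamma)$; since $\alpha_{i+1}$ is uniform over $A_{i+1}$, each of the $i$ gaps between consecutive original arcs is selected with probability roughly $1/i$, and $\sum_j n_j \le i$ by Lemma~\ref{lemma:two-new-primes}, so the expected number traced is $\sum_j n_j/i \le 1$. That is the whole proof.

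Your approach instead rests on the claim that the number of new arcs scanned is at most $2$ plus the number of new-arc regions destroyed by inserting $\freg(\beta)$, and then you do backward analysis on the destroyed arcs. You explicitly flag this claim as the main obstacle, and it does not hold. Consider the gap $\alpha,\delta_1',\delta_2',\delta_3',\gamma$ in $G_i'$ with the entry point for $\beta$ at $\nu(\delta_3',\gamma)$: a scan from $\alpha$ traces three new arcs, yet $\delta_1'$ and $\delta_2'$ lie entirely outside the original extent of $\beta$ in $\gm(S)$ and therefore survive the insertion, and $\delta_3'$ may merely shrink. Scanning from both ends of the gap does not rescue the bound either when the entry point sits in the middle of a long run of surviving new arcs. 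Without this structural inequality, the quantity you want to charge against has nothing to count, and the probabilistic accounting that follows (which is itself somewhat muddled, conflating arcs created by removing $\alpha_{i+1}$ in phase~1 with arcs destroyed by inserting it in phase~2) does not close the argument. The fix is not to repair the destruction bookkeeping but to drop it: you already have the observation that the scanned arcs lie between the stored pair $(\alpha,\gamma)$, and applying backward analysis directly to \emph{which gap is selected} (rather than to \emph{what is destroyed}) gives the result immediately.
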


\begin{proof}
By construction, $G_i'$ is an augmented subsequence of $\gm(S)$ corresponding to $G_i$. 
Thus, by Lemma~\ref{lemma:two-new-primes}, $G_i'$ consists of
at most $i$ new arcs plus the $i$ original arcs of $G_i$.
To insert arc $\alpha_{i+1}=\beta$ at one step of Phase~2,  the pair $\alpha,\gamma$ of  consecutive
original arcs (that has been stored with $\alpha_{i+1}$) is
picked, and some new arcs between $\alpha$ and $\gamma$ in $G_i'$ may
be traced. 
Since every element of $A_{i+1}$ is equally likely to be 
$\alpha_{i+1}$, each pair of consecutive original arcs in $G'_i$
has probability $1/i$ to be considered at step $i$.
Let $n_j$ be the number of new arcs in-between the $j$th pair of
original arcs in $G_i'$, $1\leq j\leq i$; $\sum_{j=1}^in_j \leq i$.
The expected number of new arcs that are traced is then 
$\sum_{j=1}^i({n_j}/i) \leq 1$.
\qed
\end{proof}

\begin{theorem}
\label{thm:linear-rand}
Given $\gm(S)$, the $\fvd(S)$ can be computed in expected $O(h)$ time, where $h$ is the complexity of $\fvd(S)$. 
\end{theorem}

\begin{proof}
We use backwards analysis, going from $\fvd(G_{i+1}')$ to
$\fvd(G_{i}')$.
Recall that inserting  
$\alpha_{i+1}$ in $\fvd(G'_i)$ takes time proportional
to the complexity of its boundary,  
plus the number of new arcs that get deleted (if any) by the
insertion of $\alpha_{i+1}$,
plus occasionally, the
complexity of its  neighboring Voronoi region.
The latter two additions to the complexity represent a  difference  from the corresponding
argument in the case of points.  
The former addition is expected $O(1)$ due to  Lemma~\ref{lemma:exp-const-primes}; the latter addition is as well  expected $O(1)$
by the following argument. 
Since $A_h$ is a random permutation of the arcs in $G_h$, 
the expected time complexity of inserting $\alpha_{i+1}$ in
$\fvd(G_i')$ is equivalent to the expected complexity of a randomly selected face
in $\fvd(G_{i+1}')$, plus the expected complexity of its immediate neighbor.
Since $\fvd(G_i')$ has size $O(i)$ and it consists of $O(i)$ faces,
the expected complexity of a randomly selected region is constant.
The same holds for one neighbor of the randomly selected region.
Thus, the expected time spent to insert   $\freg(\alpha_{i+1})$ in
  $\fvd(G_i')$ is constant.
Since the total number of arcs is $h$, the claim follows.
\qed
\end{proof}

\section*{Concluding Remarks}
We expect that Theorem~\ref{thm:linear-rand} is applicable to updating
a nearest-neighbor segment Voronoi diagram, after the deletion of one
segment, in expected linear time. 
The adaptation, however,  is non-trivial and we plan to address it in
a subsequent paper.
We also expect Theorem~\ref{thm:linear-rand} to be applicable (after
adaptation) to computing the order-$(k{+}1)$
subdivision within an order-$k$ Voronoi region in expected time proportional to the
complexity of the region's boundary. 
A deterministic linear-time construction for the farthest segment Voronoi
diagram, given its Gaussian map, remains an open problem. In future
research we plan to  investigate 
the linear-time framework of Aggarwal et al.~\cite{AGSS89}.

Note that the randomized incremental
construction for 
the farthest abstract Voronoi diagram~\cite{AbstractFarthestVoronoi} is not related to the randomized
linear-time approach presented in this paper.

\subsubsection*{Acknowledgements.} We wish to thank Kolja Junginger
and Ioannis Mantas at the Universit\`a della Svizzera
  italiana for helpful discussions, reading through this manuscript,  and for producing Figure~\ref{fig:bad-cases}.

\bibliographystyle{splncs03}
\bibliography{voronoi}
\end{document}